\theoremstyle{plain}
\newtheorem{theorem}{Theorem}[section]
\newtheorem{lemma}[theorem]{Lemma}
\newtheorem{proposition}[theorem]{Proposition}
\theoremstyle{definition}
\DeclareMathOperator*{\bigsqcap}{\rotatebox[origin=c]{180}{$\bigsqcup$} } 
\DeclareMathOperator*{\lant{lant}}
\begin{document}

\title{Description logics as polyadic modal logics}

\author[*]{Jonne Iso-Tuisku}
\author[**]{Antti Kuusisto}
\affil[*]{University of Helsinki}
\affil[**]{University of Helsinki and Tampere University}

\date{}

\maketitle

\begin{abstract}
\noindent
We study extensions of standard description logics to the framework of polyadic modal logic. We promote a natural approach to such logics via general relation algebras that can be used to define operations on relations of all arities. As a concrete system to illustrate our approach, we investigate the polyadic version of $\mathcal{ALC}$ extended with relational permutation operators and tuple counting. The focus of the paper is conceptual rather than technical.
\end{abstract}

\section{Introduction}

Description logics can be naturally represented as algebras of different kinds. A natural basic setting for this can be constructed from \emph{relation operators}, as 
defined in \cite{alg2021}. Basically, a relation operator is a function that maps tuples of relations into relations. More rigorously, a relation operator of arity $k$ is a class that 
outputs, when given an set $A$ as an input, a function
\[f^A : (\mathit{rel}(A))^k \rightarrow \mathit{rel}(A)\]
where $\mathit{rel}(A)$ is the set of all relations over
the set $A$. Two further constraints hold:
\begin{enumerate}
    \item 
The relations in $\mathit{rel}(A)$ are 
\emph{arity definite}, meaning that 
empty relations have an associated arity; we denote the $n$-ary
empty relation by $\emptyset_n$. The reason for considering arity definite relations is that the complement
relation of $\emptyset_n$ in $\mathit{rel}(A)$ is then the $n$-ary 
total relation $A^n$. This ensures, e.g., that the complement of the complement of $A^n$ is $A^n$ itself for every $n$.
\item
The relation operator $f$ is isomorphism invariant in the sense that if
the structures
$(A,S_1,\dots , S_k)$ and $(B,T_1,\dots , T_k)$ are isomorphic via a bijection $g$, then so
are $(A, f^A(S_1,\dots , S_k))$ and $(B, f^B(T_1,\dots , T_k))$ (also 
via $g$). Note here that the isomorphisms make sure that the arities of corresponding relations (e.g., the arities of $S_i$ and $T_i$) match, even in the case that the relations are empty.
\end{enumerate}

Any sequence $(f_i)_{i\in I}$ of
relation operators defines an algebraic system denoted $\mathrm{GRA}(f_i)_{i\in I}$, where
$\mathrm{GRA}$ stands for \emph{general relation algebra} and $I$ is an arbitrary set of indices. When $(f_i)_{i\in I}$ is a finite
sequence $(f_1,\dots, f_n)$, then we simply write $\mathrm{GRA}(f_1,\dots , f_n)$.

Now consider an arbitary relational structure $\mathfrak{A}$. The 
system $\mathrm{GRA}(f_i)_{i\in I}$ defines an algebra over $\mathfrak{A}$ in a natural way.
The terms of the algebra are built as follows.

\begin{enumerate}
    \item 
    Every relation symbol $Q$ in the vocabulary of
    $\mathfrak{A}$ is an atomic term interpreted as
    the relation $Q^{\mathfrak{A}}$.
    %
    %
    %
    \item
    If $t_1,\dots , t_n$ are terms and $f$ an $n$-ary 
    relation operator,\footnote{It is safe not to differentiate between
    a relation operator and the symbol $f$ denoting it, as this causes no real problems in our study}
    then $f(t_1,\dots , t_n)$ is a term 
    interpreted in the obvious way as the relation obtained by applying $f^A$ to
    the interpretations of $t_1,\dots , t_n$. Here $A$ is of course the domain of $\mathfrak{A}$.
\end{enumerate}

The algebra defined by $\mathrm{GRA}(f_i)_{i\in I}$ over a model $\mathfrak{A}$ gives rise to a set of relations, the set of those relations that the terms of $\mathrm{GRA}(f_i)_{i\in I}$ define over $\mathfrak{A}$. We note the obvious fact that for ecample first-order logic FO similarly defines a set of relations over $\mathfrak{A}$: a formula $\varphi(x_1,\dots, x_k)$ in the free variables $x_1,\dots , x_k$
defines the $k$-ary relation 
\[ \{\, (a_1,\dots , a_k)\in A^k\, |\, \mathfrak{A}\models \varphi(a_1,\dots ,a_k )\ \}. \]
Note that here we assume that the variables of FO consist of
the set $\{ x_i\ |\ i \in \mathbb{Z}_+ \}$, and we use the ordering of the subindices of the variables to define the relation that $\varphi(x_1,\dots , x_k)$ corresponds to. Thus for example $\psi(x_2,x_1)$ defines the inverse of the relation defined by $\varphi(x_1,x_2)$. Note also that the variables themselves do not matter, only their ordering, so $\varphi(x_1,\dots , x_k)$ and $\varphi(x_{i_1},\dots , x_{i_k})$
for $i_1 < \dots < i_k$ define the same relation.
We say that a logic L is \emph{equiexpressive}
with $\mathrm{GRA}(f_i)_{i\in I}$ if L and $\mathrm{GRA}(f_i)_{i\in I}$ can define precisely the same relations over all relational models $\mathfrak{A}$.

Now consider the following seven relation operators.

\begin{enumerate}
    \item
    $p$ denotes the \emph{cyclic permutation} operator interpreted so that if $R\subseteq A^k$ is a $k$-ary 
    relation over $A$, then \[p^A(R) =
    \{(a_k,a_1,\dots , a_{k-1})\ |\ (a_1,\dots , a_k)\in R\},\]
    that is, $p^A$ places the last element of every tuple to the beginning of that tuple. This operator leaves relations of arity less than two as they are.
    \item
    $s$ denotes the \emph{swap} operator,
    \[s^A(R) = \{(a_1,\dots , a_{k-2}, a_k, a_{k-1})\ |\ (a_1,\dots , a_k)\in R\}\]
    which swaps the last two elements of each tuple. Like $p$, this operator does nothing to relations of arity less than two.
    \item
    $I$ denotes the \emph{identification} operator, 
    \[I^A(R) = \{(a_1,\dots , a_{k-1})\ |\ (a_1,\dots , a_k)\in R\text{ and } a_{k-1} = a_k\}.\]
    Like $p$ and $s$, this operator leaves relations of arity less that two as they are.
Informally, $I$ scans all tuples and accepts those where the last two elements are the same, and in 
such tuples, the last repetition is then deleted. Thus $I$ decreases the arity of input relations.
    \item
    $\neg$ is the complementation operator, so $\neg^A$ returns the complement $A^k \setminus R$ of the 
    $k$-ary relation $R$. Note that if $R = \emptyset^k$, then $\neg^A(R)$ is $A^k$.
    \item
    $J$ is the \emph{join} operator, or \emph{Cartesian product} operator. If $R$ and $S$ are $k$-ary and $n$-ary relations over $A$, then $J^A(R,S)$ is the $(k + n)$-ary relation 
    \[       \{ (a_1,\dots , a_k,b_1,\dots , b_n)\ |\ 
    (a_1,\dots , a_k) \in R\text{ and } (b_1,\dots , b_n)\in S\}.\]
    \item
    $\exists$ is the \emph{existential quantification} operator or
    \emph{projection} operator, 
    \[\exists^A(R) = \{(a_1,\dots , a_{k-1})\ |\ (a_1,\dots , a_k)\in R\text{ for some }a_k\in A \}.\]
    \item The operator $e$ denotes the constant operation interpreted as  
    the \emph{equality} relation $\{(a,a)\ |\ a\in A\}$ on
    every set $A$.
\end{enumerate}

The following is proved in \cite{alg2021}.

\begin{proposition}
$\mathrm{GRA}(e,p,s,I,\neg,J,\exists)$ and
$\mathrm{FO}$ are equiexpressive.
\end{proposition}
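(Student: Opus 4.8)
The proof is two containments. For the easy direction — every $\mathrm{GRA}(e,p,s,I,\neg,J,\exists)$-definable relation is $\mathrm{FO}$-definable — I would proceed by induction on the structure of terms. Each atomic term $Q$ is defined by the atomic formula $Q(x_1,\dots,x_k)$, and each of the seven operators has an obvious first-order translation: $p$ and $s$ correspond to permuting free variables, $I$ corresponds to substituting $x_{k-1}$ for $x_k$ (i.e. forming $\varphi(x_1,\dots,x_{k-1},x_{k-1})$ and renaming), $\neg$ is negation, $J$ is conjunction of formulas on disjoint variable blocks, $\exists$ is existential quantification of the last variable, and $e$ is the formula $x_1 = x_2$. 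One must be a little careful that the variable-ordering convention (relations are read off from the subscript order of free variables) is respected at each step, but this is routine bookkeeping.

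For the hard direction — every $\mathrm{FO}$-definable relation is $\mathrm{GRA}$-definable — I would again induct, this time on the structure of $\mathrm{FO}$-formulas, maintaining the invariant that a formula $\varphi$ whose free variables are exactly $x_{i_1},\dots,x_{i_k}$ (in increasing order) is matched by a term defining the corresponding $k$-ary relation. Atomic formulas $Q(\cdots)$ and $x_i = x_j$ are handled using $e$, $J$, and the permutation/identification operators to arrange and identify argument positions correctly; negation is $\neg$; conjunction $\varphi \wedge \psi$ is the subtle case because the free-variable sets of $\varphi$ and $\psi$ may overlap arbitrarily, so one must take the join $J$ of the two relations and then use $I$ (together with $p,s$) to identify the positions corresponding to shared variables, and finally reorder; existential quantification $\exists x_i\, \varphi$ uses $p$/$s$ to move the $x_i$-position to the end and then applies $\exists$. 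Disjunction and $\forall$ are derived via De Morgan from $\neg$ and the cases already handled.

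The main obstacle is the conjunction case, and more generally the "plumbing" of argument positions: $\mathrm{GRA}$-terms have no named variables, only positional slots, so every time an $\mathrm{FO}$-connective mentions variables that appear in both subformulas or that need to be reordered, one must simulate variable management purely with $p$, $s$, $I$, and $J$. The key lemma to isolate and prove first is that using only $\{p,s\}$ one can realize an arbitrary permutation of the positions of any relation (the cyclic shift $p$ and the transposition $s$ of the last two coordinates generate the full symmetric group on $k$ points), and that using $e$, $J$, $I$ together with such permutations one can, given a $k$-ary relation $R$ and any function $\sigma:\{1,\dots,m\}\to\{1,\dots,k\}$, build the relation $\{(a_{\sigma(1)},\dots,a_{\sigma(m)}) : (a_1,\dots,a_k)\in R\}$ — i.e. arbitrary diagonals, projections-with-repetition, and substitutions. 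Once this "substitution lemma" is in hand, the conjunction case reduces to: take $J(t_\varphi, t_\psi)$, then apply $I$ repeatedly to glue together the slots indexed by common variables, then permute into the canonical order; and the whole induction goes through smoothly. I would also double-check the treatment of formulas with no free variables (sentences), which correspond to $0$-ary relations — the arity-definite convention from constraint (1) is exactly what makes $\neg$ behave correctly there, so this edge case is worth an explicit sentence in the write-up.
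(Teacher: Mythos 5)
Your proposal is correct and follows what is essentially the standard (and, as far as one can tell, the intended) argument: the paper itself does not prove this proposition but imports it from the cited reference, and your two-directional plan---a routine term-by-term translation into FO, and an induction on FO-formulas whose crux is exactly the lemma you isolate, namely that $p$ and $s$ generate all coordinate permutations and, combined with $J$, $I$ and $e$, allow arbitrary identification and substitution of argument positions---is the same route. Your treatment of the conjunction case, the variable-ordering convention, and the nullary (sentence) edge case under arity-definiteness is sound, so the plan needs no repair.
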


Now, a relational FO-atom is any formula of 
the form $R(y,\dots , z)$ where $R$ is a $k$-ary relation symbol; the equality symbol is not considered a relation symbol here.
The following result from \cite{alg2021} is also interesting to us.

\begin{proposition}\label{atomproposition}
$\mathrm{GRA}(p,s,I)$ is equiexpressive with the set of relational $\mathrm{FO}$-atoms.
$\mathrm{GRA}(e,p,s,I)$ is equiexpressive with the set of $\mathrm{FO}$-atoms.
\end{proposition}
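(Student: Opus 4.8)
The plan is to prove the two equiexpressivity claims by establishing inclusions in both directions, treating the statement about relational $\mathrm{FO}$-atoms first and then reducing the statement about all $\mathrm{FO}$-atoms to it. Recall that a relational $\mathrm{FO}$-atom $R(y_{i_1},\dots,y_{i_k})$ over a model $\mathfrak{A}$ defines the relation $\{(a_1,\dots,a_k) : \mathfrak{A} \models R(a_{j_1},\dots,a_{j_k})\}$ where the indexing is determined, as in the excerpt's convention, by the ordering of the variable subindices. So the relations definable by relational $\mathrm{FO}$-atoms are exactly those obtained from some basic relation $R^{\mathfrak{A}}$ by (i) permuting coordinates arbitrarily, (ii) identifying coordinates (when a variable is repeated), and (iii) reading off the resulting relation in the induced coordinate order. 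The first direction is then to show every such relation is a $\mathrm{GRA}(p,s,I)$-term over $\mathfrak{A}$: starting from the atomic term $R$, I would use the standard fact that the cyclic permutation $p$ together with the adjacent transposition $s$ of the last two coordinates generate the full symmetric group on the coordinates of a $k$-ary relation (conjugating $s$ by powers of $p$ yields all adjacent transpositions, which generate $S_k$), so any permutation of coordinates is a composition of $p$'s and $s$'s; and the identification operator $I$ contracts two coordinates that have been moved, via permutations, into the last two positions. Inducting on the number of repeated variables, each identification step is "set up" by a permutation, performed by $I$, and the remaining coordinates re-permuted into the target order — so every relational $\mathrm{FO}$-atom relation is captured.

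For the converse direction, I would argue that every $\mathrm{GRA}(p,s,I)$-term over $\mathfrak{A}$ defines a relation expressible by a single relational $\mathrm{FO}$-atom, by induction on term structure. The base case is an atomic term $Q$, which is the atom $Q(x_1,\dots,x_k)$. For the inductive step, if a term $t$ defines the relation of some atom $R(y_{i_1},\dots,y_{i_k})$, then $p(t)$, $s(t)$, $I(t)$ each define the relation of an atom obtained by reordering or merging the argument list — e.g.\ $I$ applied to the atom $R(\dots, y_{i_{k-1}}, y_{i_k})$ yields the atom in which the occurrences of $y_{i_{k-1}}$ and $y_{i_k}$ are replaced by a single common variable, which is again a relational $\mathrm{FO}$-atom (possibly with a repeated variable, which is still an atom in the relational symbol $R$). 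This shows $\mathrm{GRA}(p,s,I)$ defines nothing beyond relational $\mathrm{FO}$-atoms, completing the first claim.

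For the second claim, the set of $\mathrm{FO}$-atoms additionally includes equality atoms $x_i = x_j$, whose defined relation over any $A$ is a permutation-and-identification image of the binary equality relation $\{(a,a)\}$ — for instance $x_i = x_j$ with $i<j$ defines $\{(a,a)\} = e^{\mathfrak{A}}$ itself, and $x_i = x_i$ defines the unary total relation, which is $\exists$... no: here one only needs the constant $e$, since $e^{\mathfrak{A}} = \{(a,a) : a \in A\}$ is exactly the relation of the atom $x_1 = x_2$, and applying $I$ to $e$ gives the unary total relation, the relation of $x_1 = x_1$. Thus $\mathrm{GRA}(e,p,s,I)$ extends the previous case by exactly the equality atoms and their permutation/identification images, and conversely any $\mathrm{GRA}(e,p,s,I)$-term is either (by the first claim) a relational $\mathrm{FO}$-atom or built from $e$ by $p,s,I$, hence an equality atom; so the two sides define the same relations.

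The main obstacle I anticipate is bookkeeping the correspondence between the excerpt's ordering convention for $\mathrm{FO}$-atoms and the fixed "operate on the last coordinate(s)" behaviour of $p,s,I$: one must check carefully that every coordinate permutation really is reachable from $p$ and $s$ (the group-generation argument) and that, after an $I$-contraction, the surviving coordinates land in the order dictated by the atom's variable subindices. This is routine once set up, but getting the indices to line up cleanly in the induction — especially across repeated applications of $I$ that change arities — is where the argument needs care.
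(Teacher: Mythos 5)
Your argument is correct in substance; note that the paper itself does not prove Proposition \ref{atomproposition} but imports it from \cite{alg2021}, and your route --- generating all coordinate permutations of a relation from $p$ and $s$, handling each repeated variable by permuting its two occurrences into the last two positions and applying $I$, and a converse induction on terms (with the second claim reduced to the first since every $\mathrm{GRA}(e,p,s,I)$-term has a single leaf, either a relation symbol or $e$) --- is the natural one and in line with the cited source. The only point needing the care you already flag is the $I$-case of the converse direction: because the relation defined by an atom is ordered by variable subindices, $I(t)$ corresponds to replacing every occurrence of the variable with the largest subindex by the variable with the second largest subindex, not to merging the last two argument positions of the atom.
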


Description logics can naturally be conceived as general relation algebras of different kinds. The related perspectives can have a particularly unifying flavour when considering systems with $k$-ary relations in addition to the usual binary ones. Coupling general relation algebras with polyadic modal logic is especially interesting.

To illustrate this, consider first the following syntax.
\begin{align*}
C &::=  \top \mid \bot \mid A \mid  \neg C  \mid (C_1 \sqcap C_2) \mid \exists R . (C_2,\ldots,C_{n})
\end{align*}
where the symbol $A$ is chosen from the set of \emph{atomic concepts} and the $k$-ary symbol $R$ from the set of \emph{atomic roles}. We call this system $\mathcal{ALCP}$, or \emph{polyadic} $\mathcal{ALC}$. 
The semantics is the usual one for polyadic modal logic, so
the interpretation of $\exists R . (C_2,\ldots,C_{n})$ consists of 
the elements $u$ in the domain such that $R(u,v_2,\dots v_n)$ for
some $v_2,\dots , v_n$ that belong to the interpretations of $C_2,\dots , C_n$, respectively.

Now, recall that $\mathcal{ALCI}$ is the extension of $\mathcal{ALC}$ with inverse roles, so it is natural to ask what the corresponding extension of $\mathcal{ALCP}$ would be. Following the approach of \cite{DBLP:conf/dlog/Kuusisto16}, one could argue that the corresponding extension is obtained by adding the capacity to define all the possible permutations of higher-arity roles to the picture. This can be naturally achieved by adding the cyclic permutation operator $p$ and the swap operator $s$ to the setting. Let us do this formally.

Let the syntax of $\mathcal{ALCP}(p,s)$ be obtained via the grammar
\begin{align*}
\mathcal{R} &::=  R \mid \mathcal{R}^p \mid \mathcal{R}^s\\
C &::=  \top \mid \bot \mid A \mid  \neg C  \mid (C_1 \sqcap C_2) \mid \exists \mathcal{R} . (C_2,\ldots,C_{n})
\end{align*}
where again $R$ and $A$ are atomic symbols.
The term $\mathcal{R}^p$ denotes the relation that can be obtained by applying the cyclic permutation operator $p$ to the relation corresponding to $\mathcal{R}$, and similarly, $\mathcal{R}^s$ is obtained from $\mathcal{R}$ by the swap operator.

Now, all permutations of a tuple can be obtained via the use of $p$ and $s$ (this follows from well known results on permutations and is also proved in \cite{alg2021}). Thereby $\mathcal{ALCP}(p,s)$ can be seen as a natural extension of $\mathcal{ALCI}$ to contexts with relations of arbitrary arities. It is the natural polyadic extension of $\mathcal{ALCI}$. In the light of Proposition \ref{atomproposition}, also for example $\mathcal{ALC}(p,s,I)$ is a rather natural system, with $p,s$ and $I$ being able to define precisely all roles obtainable by a relational first-order atom. The syntax of $\mathcal{ALCP}(p,s,I)$ extends that of $\mathcal{ALCP}(p,s)$ by the role constructor $I$, with $\mathcal{R}^I$ interpreted as the role consisting of the tuples $(a_1,\dots , a_{k-1})$ such that $(a_1,\dots , a_k)$ belongs to the interpretation of $\mathcal{R}$ and 
we have $a_{k-1} = a_k$. The yet further extension $\mathcal{ALCP}(e,p,s,I)$ is also interesting in
the light of Proposition \ref{atomproposition}, with the binary constant role $e$ enabling also the definition of
equality atoms.

There is a wide range of description logics that can be built in this spirit, and all this calls for further investigation. Firstly, we can consider polyadic logics of the form $\mathcal{ALCP}(f_i)_{i\in I}$ for any 
sequence $(f_i)_{i\in I}$ of relation operators. Studying systems that are fragments of FO is interesting, and also going beyond first-order logic is of course natural, e.g., by investigating different generalized quantifiers, defining generalizations of the transitive closure operator to higher-arity contexts, et cetera. Secondly,
many systems $\mathrm{GRA}(f_i)_{i\in I}$ can be regarded as interesting description logics as they stand, without involving polyadic modal logic. Indeed, the approach via polyadic modal logic is only one of many 
related possibilities.

In this paper we take some steps towards better understanding description logics based on polyadic modal logic, thereby promoting the setting of polyadic modal logic---coupled with general relation algebras---as a worthy framework for related work. As a particular technical demonstration, we show that
the concept satisfiability problem for $\mathcal{ALCQP}(p,s)$ is \textsc{PSpace}-complete.
Here $\mathcal{Q}$ denotes counting of tuples; see the preliminaries section for the details.
The result concerning $\mathcal{ALCQP}(p,s)$ is also a new result concerning a new fragment\footnote{Of course $\mathcal{ALCQP}(p,s)$ nevertheless
relates quite directly to, e.g., $\mathcal{DLR}$.} of first-order logic, but nevertheless, the result in
itself is neither surprizing nor particularly difficult to obtain. The point in this article is more to promote the polyadic framework, not so much to prove yet another little technical result. \emph{We believe the setting can quite directly and naturally generalize an extremely wide range of results in standard description logics to the context of higher order relations, typically without changing the complexity class.} This is perhaps the main metaprinciple promoted by this paper.

As a further concrete result, we also provide a very simple algebraic
characterization for $\mathcal{ALC}$. The point is to illustrate a more
direct use of general relation algebras,  
this time without a detour via polyadic modal logics.

We note that using polyadic modal logic as a basis for description logics has been previously studied in some detail in \cite{DBLP:conf/dlog/Kuusisto16}, which also makes use of algebraic operators on higher-arity relations. The current paper builds upon that study.

\section{Preliminaries}

We define the description logic $\mathcal{ALCQP}$ by the grammar
\begin{align*}
%
%
        C &::=  \top \mid \bot \mid A \mid  \neg C  \mid (C_1 \sqcap C_2) \mid {\geq} k R . (C_2,\ldots,C_{n}),
\end{align*}
where the following conditions hold.
\begin{enumerate}
\item
$\top$, $\bot$ are atomic constant concepts.
\item
$A$ is an atomic concept (concept name).
\item
$R$ is an atomic role (role name) associated with an arity at least $2$.
\item
The arity of the role $R$ in the concept ${\geq} k R . (C_2,\ldots,C_{n})$ is $n$,
and $k$ is a positive integer encoded in binary. 
\end{enumerate}
The description logic $\mathcal{ALCQP}(p,s)$ extends $\mathcal{ALCQP}$ and is defined by the grammar
\begin{align*}
        \mathcal{R} &::= R \mid \mathcal{R}^s \mid \mathcal{R}^p  \\
        C &::=  \top \mid \bot \mid A \mid  \neg C  \mid (C_1 \sqcap C_2) \mid {\geq} k \mathcal{R} . (C_2,\ldots,C_{n})
\end{align*}
where $\mathcal{R}$ is a (possibly non-atomic) role of arity at least $2$.

An \textit{interpretation} $\mathcal{I} = (\Delta^\mathcal{I},\cdot^\mathcal{I})$ is a pair that consists  
of a non-empty set $\Delta^\mathcal{I}$ called the 
\textit{domain} of $\mathcal{I}$, and a map $\cdot^\mathcal{I}$ called
a \textit{valuation}.  
The valuation maps each role name $R$ of arity $n$ to a relation $R^\mathcal{I} 
\subseteq (\Delta^\mathcal{I})^n$ and each concept
name $A$ to a set $A^\mathcal{I} \subseteq \Delta^\mathcal{I}$. The constant concepts are interpreted in the usual way, $\top^\mathcal{I} := \Delta^\mathcal{I}$ and $\bot^\mathcal{I} := \emptyset$.
Below we may write $C^\mathcal{I}u$ to indicate that $u \in C^\mathcal{I}$.
Likewise, $\mathcal{R}^\mathcal{I}uv$ means that
$(u,v) \in \mathcal{R}^\mathcal{I}$.
For denoting interpretations, we mostly use symbols $\mathcal{I}$, $\mathcal{I'}$ and $\mathcal{J}$.

The (composed) roles $\mathcal{R}$ are interpreted as follows.
\begin{itemize}
%
%
%
\item $(\mathcal{R}^s)^\mathcal{I} := \{ (u_1, \ldots,u_{n-2},u_{n},u_{n-1} ) \mid
( u_1,\ldots,u_n ) \in \mathcal{R}^\mathcal{I} \}$. The operator $s$ is called the \textit{swap} operator, or
\emph{swap permutation} operator. It swaps (only) the last two elements of each tuple. Over binary relations, $s$ is equivalent to the standard inverse 
operator that produces the inverse of a binary relation.
\item $(\mathcal{R}^p)^\mathcal{I} := \{ ( u_n,u_1,\ldots,u_{n-1}) \mid
( u_1,\ldots,u_n ) \in \mathcal{R}^\mathcal{I} \}$. The operator $p$ is called the \textit{cyclic permutation} (or \textit{circular shift}) operator. It moves the last element of each tuple to the first position.
As $s$, the operator $p$ is equivalent to the standard inverse operator on binary relations.
\end{itemize}

%

It is well known that by circular shift and swap permutations, an arbitrary finite tuple can be permuted in an arbitrary way. This follows from basic results in algebra. For an easy proof in the context of general relation algebras, see \cite{alg2021}.
Thus, combining the swap and circular shift operators in different ways, e.g. $((R^s)^{p})^{p}$, we obtain all permutations of $R$. 
We shall often lighten our notation and omit parentheses, e.g., for $((R^s)^{p})^{p}$, we write $R^{s\, p\, p}$. 
Furthermore, we often denote strings like ${s\, p\, p}$ by $\pi$.
Such strings composable from $p$ and $s$ are called \emph{permutation strings}. 
We note that also the empty string of operators $p$ and $s$ is of course a permutation string.



%
%
Negated concepts and conjunctions of concepts are interpreted in the usual way:
\begin{itemize}
\item $(\neg C)^\mathcal{I} := \Delta^\mathcal{I} \setminus C^\mathcal{I}$
\item $(C \sqcap D)^\mathcal{I} := C^\mathcal{I} \cap D^\mathcal{I}$ 
\end{itemize}
The \textit{cardinality restrictions} $\geq k$
are interpreted as follows.
\begin{itemize}
\item $(\, {\geq} k \mathcal{R}. (C_2,\ldots, C_{n}) \,)^\mathcal{I} := \\ {\{ x \mid   {|\{} \bar{v} \mid \bar{v} = (x,u_2,\ldots,u_{n}) \in \mathcal{R}^\mathcal{I} \text{ and } C_i^\mathcal{I}u_i \text{ for each i} \} | \geq k \}}.$
\end{itemize}

We may use the following shorthands:
\begin{itemize}
\item
$\exists \mathcal{R}.(C_2,\ldots,C_{n})$ is shorthand for ${\geq} 1 \mathcal{R}.(C_2,\ldots,C_{n})$.
\item
${<} k \mathcal{R}.(C_2,\ldots,C_{n})$ is shorthand for $\neg {\geq} k \mathcal{R}.(C_2,\ldots,C_{n})$.
\item
$\forall \mathcal{R}.(C_2,\ldots,C_{n})$ is shorthand for ${< }1 \mathcal{R}.(\neg C_2,\ldots, \neg C_{n})$.
This is analogous to the definition of the polyadic box operator, and it is
equivalent to $\neg\exists \mathcal{R}.(\neg C_2,\ldots, \neg C_{n})$.
\item
${=}k\mathcal{R}.(C_2,\ldots,C_{n})$ is shorthand for 
$$ {\geq}k\mathcal{R}.(C_2,\ldots,C_{n})  \sqcap  {<}(k+1) \mathcal{R}.(C_2,\ldots,C_{n}).$$
\end{itemize}

For a concept $C$ and $u \in \Delta^{\mathcal{I}}$, we
may write $\mathcal{I},u\models C$ to denote that $u\in C^{\mathcal{I}}$.

\subsection{Concept satisfiability} \label{csat_empty_T_ss}

A concept  $C$ is satisfiable if
there exists an interpretation $\mathcal{I}$ such that $C^\mathcal{I} \neq \emptyset$.
In other words, there is an element $a$ in the domain $\Delta^\mathcal{I}$ such that
$a \in C^\mathcal{I}$.
In this subsection, we show that 
concept satisfiability for $\mathcal{ALCQP}(p,s)$ is \textsc{PSpace}-complete
by reducing
that problem to the concept satisfiability $\mathcal{ALCQI}$, where $\mathcal{ALCQI}$ is the well-known 
extension of standard $\mathcal{ALC}$ with qualified number restrictions and inverses. 
For the reduction, we define a translation from $\mathcal{ALCQP}(p,s)$-concepts
into $\mathcal{ALCQI}$-concepts that turns
higher-arity relations into binary ones, i.e., we define a suitable \emph{reification} procedure.
We translate $\mathcal{ALCQP}(p,s)$-concepts into $\mathcal{ALCQI}$-concepts using an operator $\mathbb{T}$.
First, for each atomic $\mathcal{ALCQP}(p,s)$-role symbol $R$, we define the fresh atomic concept symbol $L_R$.
Then, for each $n\geq 2$, we define the concept
$$Outdeg_n\ :=\ ({=}1 F_{1}.\top \sqcap \ldots \sqcap {=}1 F_{n}.\top)
\sqcap (\forall F_{1}. C_{dom} \sqcap \ldots \sqcap \forall F_{n}. C_{dom}),
$$
where $C_{dom}$ is a fresh concept name and each $F_i$ a fresh binary role name.
At those elements where $Outdeg_n$ holds, it
restricts the out-degree of each $F_i$ to one,
forcing $F_i$ to behave like a functional\footnote{Strictly speaking, $F_i$ will be a partial function.}  role
whose range consists of elements satisfying $C_{dom}$. 

Now, suppose we have fixed a context with an $n$-ary role symbol $R$ and a
permutation string $\pi$. Let $R^{\mathcal{I}}$ and $(R^{\pi})^{\mathcal{I}}$ be corresponding 
relations in an interpretation $\mathcal{I}$. For example, if $R^{\mathcal{I}} = \{(u_1,\dots , u_n)\}$
and $\pi = pp$,
then $(R^{pp})^{\mathcal{I}} = \{(u_{n-1}, u_n,u_1,\dots , u_{n-2})\}$. In this case, we notice, e.g., 
that the $n$th coordinate of (the tuple of) $R^{\mathcal{I}}$ gets sent to the second coordinate
place of (the tuple of) $(R^{pp})^{\mathcal{I}}$, and the coordinate $(n-1)$ gets sent to the first coordinate place.
We let $\pi_i \in \{1, \dots , n\}$ 
denote the coordinate place of $R^{\mathcal{I}}$
that gets sent to the $i$th coordinate place of $(R^{\pi})^{\mathcal{I}}$. For instance, in the above example 
where $R^{\mathcal{I}} = \{(u_1,\dots , u_n)\}$
and $\pi = pp$ and thus $(R^{\pi})^{\mathcal{I}} = \{(u_{n-1}, u_n,u_1,\dots , u_{n-2})\}$, we
have $\pi_1 = n-1$ and $\pi_2 = n$.

We are now ready to define a translation operator $\mathbb{T}$. Suppose we are translating a
concept $C_0$ of $\mathcal{ALCQP}(p,s)$ whose set of role symbols is $\textbf{S}$.
Then, for each $R\in \textbf{S}$, we let 
$$\chi_R\ :=\ L_R \sqcap \bigsqcap\limits_{ S \in \mathbf{R}\setminus \{R\}}\neg L_S,$$
that is, $\chi_R$ asserts that $L_R$ holds at the current point while all the
other concepts $L_S$ are false. Now, the translation $\mathbb{T}$ is
defined for the subconcepts of $C_0$ as follows.

\begin{enumerate}
\item $\mathbb{T}(\top) = \top$ and $\mathbb{T}(\bot) = \bot$
\item $\mathbb{T}(A) = A$ for a an atomic concept $A$
\item $\mathbb{T}(C_1 \sqcap C_2) = \mathbb{T}(C_1) \sqcap \mathbb{T}(C_2)$
\item $\mathbb{T}(\neg C) = \neg \mathbb{T}(C)$ 
%

%
%
%
%

%
%
\item $\mathbb{T}(\, {\geq} k R^{\pi}. (C_2,\ldots, C_{n})\, ) = \\$
   ${\geq}k F_{\pi_1}^{-1}.\big(\, \neg C_{dom} \sqcap \chi_R \sqcap  Outdeg_n  \sqcap 
 \exists F_{\pi_2} . \mathbb{T}(C_2) \sqcap 
\ldots \sqcap   \exists F_{\pi_n} .\mathbb{T}(C_{n})   \, \big)$

\end{enumerate}

The item 5 shows the key idea of how we translate higher-arity relations into  a 
setting with only binary ones.
These kinds of processes are usually
referred to as \textit{reification}. Also \emph{binarization}
would be suitable name.
In our construction, every tuple $(u_1,\ldots,u_{n}) \in R^\mathcal{I}$ (where $n\geq 2$)
becomes associated with a (so-called) \textit{lantern element} $l \in L_R^\mathcal{J}$ in a new
interpretation $\mathcal{J}$ with only binary relations.
For the lantern element $l$, we 
have $(l, u_i) \in F_i^\mathcal{J}$ for each $i \in \{1,\ldots, n\}$.
In other words, the point $l$ satisfying $L_R^\mathcal{J}$
represents (or reifies) an $R$-tuple whose elements 
are images of the roles $F_i$ (see Figure \ref{reified_fig}).

\begin{figure}[ht!]\label{figure1}
\centering
\begin{tikzpicture}[scale=0.8]
\tikzstyle{Farrow} = [thick,->,>=stealth]
\tikzstyle{blackdot} = [draw, circle, fill, black, inner sep=1.5pt]

\node [blackdot, name=1] at (0,0) {};
\node  at (0.8,0.1) {$l \in L_R^{\mathcal{J}}$};
\node  at (-1.15,-0.7) {$F_1$}; 
\node  at (0.01,-0.7) {$F_2$}; 
\node  at (0.9,-0.7) {$F_n$}; 
\node [blackdot, name=2] at (-1.5,-1.5) {};
\node at (-1.8,-1.5) {$u_1$};
\node [blackdot, name=3] at (-0.5,-1.5) {};
\node at (-0.8,-1.5) {$u_2$};
\node at (0.07,-1.5) {$\cdots$};
\node [blackdot, name=4] at (1,-1.5) {};
\node at (0.6,-1.5) {$u_n$};
\draw (-0.5 ,-1.5) ellipse (55pt and 10pt);
\node at (-2.7,-1.0) {$C_{dom}^{\mathcal{J}}$};

\draw [Farrow] (1) -- (2);
\draw [Farrow] (1) -- (3);
\draw [Farrow] (1) -- (4);

\end{tikzpicture}
\caption{A tuple $(u_1, \ldots, u_n) \in R^\mathcal{I}$ reified. The element $l$ does not
belong to $C_{dom}^{\mathcal{J}}$ but instead acts as a point encoding the tuple $(u_1, \ldots, u_n)$.
The points $u_1,\dots , u_n$ are in $C_{dom}^{\mathcal{J}}$.} 
\label{reified_fig}
\end{figure}
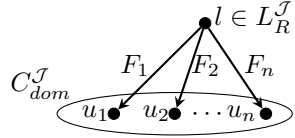

We use the rest of this section for
showing that an $\mathcal{ALCQP}(p,s)$-concept $C$ is 
satisfiable if and only if $C_{dom} \sqcap \mathbb{T}(C)$ is.
This reduces concept satisfiability of $\mathcal{ALCQP}(p,s)$ polynomially to the 
concept satisfiability problem of $\mathcal{ALCQI}$. We note that even binary roles of $\mathcal{ALCQP}(p,s)$
are reified in our translation.

We begin by proving the less laborious direction,
which is established in the following lemma.

\begin{lemma} \label{csat_empty_t_lemma1}
Let $C$ be an $\mathcal{ALCQP}(p,s)$-concept.
If $C$ is satisfiable, then 
so in $C_{dom} \sqcap \mathbb{T}(C)$.

\begin{proof}
Let $\mathcal{I}$ be a model of the $\mathcal{ALCP}(p,s)$-concept $C$, 
and let $m$ be the maximum arity of
roles in $C$.
We construct a model $\mathcal{J}$ for $C_{dom} \sqcap \mathbb{T}(C)$ from $\mathcal{I}$ as follows.
The domain $\Delta^{\mathcal{J}}$ will be the set $\Delta^{\mathcal{I}}\cup L$,
where $L$ contains, for each role name $R$ in $C$ and
each tuple $(u_1,\dots , u_n) \in R^{\mathcal{I}}$, a fresh 
element $u$. We may denote $u$ by $u_{R(u_1,\dots , u_n)}$. We note that $L$ stands for \emph{lantern elements}.
For each role $R$, we interpret the concept $L_{R}$ such that $(L_{R})^{\mathcal{J}}$
consists of exactly the 
points $u = u_{R(u_1,\dots , u_n)} \in L$.
We build each $F_i^{\mathcal{J}}$ in
the natural way (cf. Figure \ref{figure1}) such that
$$F_i^{\mathcal{J}}
: = \{ (u, u_i)\, |\, u = u_{R(u_1,\dots , u_i, \dots  , u_n)} \text{ for some } u_1,\dots , u_{i-1},u_{i+1},\dots , u_n\in \Delta^{\mathcal{I}} \}.$$
We define $C_{dom}^{\mathcal{J}} := \Delta^{\mathcal{I}}$ to denote the domain of $\mathcal{I}$, and
for each atomic concept $A$ occurring in $C$, we set $A^{\mathcal{J}} :=  A^\mathcal{I}$.

To conclude the proof, it suffices to establish by induction on the subconcepts $B$ of $C$ that for all elements $u\in \Delta^{\mathcal{I}} = (C_{dom})^{\mathcal{J}}$,
$$u \in B^{\mathcal{I}} \Leftrightarrow u\in 
(\mathbb{T}(B))^{\mathcal{J}}.$$
First, for each atomic concept $A$ in $C$, we
have $A^\mathcal{I} = A^\mathcal{J}$ by the definition of $\mathcal{J}$.
Assume then that for concepts $C_1,\ldots, C_{n}$ and for all $u \in \Delta^\mathcal{I}$, we have $u \in C_i^\mathcal{I}$ iff $u \in \mathbb{T}(C_i)^\mathcal{J}$. 
Let $t \in \Delta^\mathcal{I} = C_{dom}^\mathcal{J}$.
The Boolean cases are trivial:

\medskip

\noindent Conjunction: $t \in (C_1 \sqcap C_2)^\mathcal{I} \Leftrightarrow (t \in C_1^\mathcal{I}$  and   $t \in C_2^\mathcal{I}) \overset{\text{ ind. hypt.}}{\Leftrightarrow} 
 (t \in \mathbb{T}(C_1)^\mathcal{I}$ and  $t \in \mathbb{T}(C_2)^\mathcal{I})  \Leftrightarrow  
 t \in \mathbb{T}(C_1 \sqcap C_2)^\mathcal{J}.$

\medskip

\noindent Negation: $ t \in (\neg C_1 )^\mathcal{I} \Leftrightarrow t 
\not \in C_1^\mathcal{I} \overset{\text{ ind. hypt.}}{ \Leftrightarrow }
t \not \in \mathbb{T}(C_1)^\mathcal{J} \Leftrightarrow t \in \mathbb{T}(\neg C_1)^\mathcal{J}.$

\medskip

Finally, let us consider the case for cardinality restrictions.
Suppose that $t \in (\, {\geq} k R^{\pi} . (C_2,\ldots, C_{n})\, )^\mathcal{I}$.
This means that there exist at least $k$ tuples $\bar{u} = (u_1,\ldots, u_n) \in (R^{\pi})^\mathcal{I}$
such that $u_1 = t$ and $u_i \in C_i^\mathcal{I}$ for each $i\in \{2,\dots , n\}$. 
By the construction of $\mathcal{J}$, there exist at least $k$ corresponding lantern points in $L$ accessible via $(F_{\pi_1}^{-1})^{\mathcal{J}}$ from $t$, one lantern point for each tuple $\bar{u}$.
Every such lantern point satisfies $\neg C_{dom}\sqcap \chi_{R}$
and points via $F_{\pi_i}^{\mathcal{J}}$ (for each $i\in\{2,\dots , n\}$) to an
element in $\Delta^{\mathcal{I}}$ satisfying $\mathbb{T}(C_i)$ by the induction hypothesis.

Conversely, if $t$ satisfies $${\geq}k F_{\pi_1}^{-1}.\big(\, \neg C_{dom} \sqcap \chi_R \sqcap Outdeg_n  \sqcap 
 \exists F_{\pi_2} . \mathbb{T}(C_2) \sqcap 
\ldots \sqcap   \exists F_{\pi_n} .\mathbb{T}(C_{n})   \, \big),$$ there are at least $k$ lantern points satisfying
$$\neg C_{dom} \sqcap \chi_R \sqcap Outdeg_n  \sqcap 
 \exists F_{\pi_2} . \mathbb{T}(C_2) \sqcap 
\ldots \sqcap   \exists F_{\pi_n} .\mathbb{T}(C_{n})$$
and accessible from $t$ via $(F_{\pi_1}^{-1})^{\mathcal{J}}$. By our construction, there exist at least $k$
tuples $(u_1,\ldots, u_n) \in (R^{\pi})^\mathcal{I}$
such that $u_1 = t$ and $u_i \in C_i^\mathcal{I}$ for each $i\in \{2,\dots , n\}$.
\end{proof}
\end{lemma}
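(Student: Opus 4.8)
The plan is to build, from any model $\mathcal{I}$ of $C$, an $\mathcal{ALCQI}$-model $\mathcal{J}$ of $C_{dom}\sqcap\mathbb{T}(C)$ by the reification recipe described above, and then to verify by induction on subconcepts that $\mathbb{T}$ preserves membership at points of the original domain. Concretely I would set $\Delta^{\mathcal{J}}:=\Delta^{\mathcal{I}}\cup L$, where $L$ contributes one fresh \emph{lantern} $u_{R(\bar a)}$ for each role name $R$ occurring in $C$ and each tuple $\bar a\in R^{\mathcal{I}}$; then put $C_{dom}^{\mathcal{J}}:=\Delta^{\mathcal{I}}$, $A^{\mathcal{J}}:=A^{\mathcal{I}}$ for atomic $A$, $L_R^{\mathcal{J}}:=\{u_{R(\bar a)}\mid\bar a\in R^{\mathcal{I}}\}$, and let $F_i^{\mathcal{J}}$ send each lantern $u_{R(a_1,\dots,a_m)}$ to $a_i$ (for $i\le m$, where $m$ is the arity of $R$) and contain no other pairs. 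Two immediate observations I would record: (a) each lantern lies in exactly one $L_R^{\mathcal{J}}$ and outside $C_{dom}^{\mathcal{J}}$, so it satisfies $\chi_R\sqcap\neg C_{dom}$; and (b) a lantern reifying an $n$-ary tuple has exactly one $F_i$-successor for every $i\in\{1,\dots,n\}$ and that successor lies in $C_{dom}^{\mathcal{J}}$, so it satisfies $Outdeg_n$.

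Next I would prove, for every subconcept $B$ of $C$ and every $u\in\Delta^{\mathcal{I}}=C_{dom}^{\mathcal{J}}$, the equivalence $u\in B^{\mathcal{I}}\Leftrightarrow u\in\mathbb{T}(B)^{\mathcal{J}}$, by induction on $B$. The atomic and Boolean cases are immediate from the definitions of $\mathcal{J}$ and $\mathbb{T}$. The one substantial case is $B={\geq}k\,R^{\pi}.(C_2,\dots,C_n)$, and here everything turns on the permutation bookkeeping: by the definition of the indices $\pi_i$, a tuple $\bar u\in(R^{\pi})^{\mathcal{I}}$ is the $\pi$-image of a unique $\bar w\in R^{\mathcal{I}}$ with $w_{\pi_i}=u_i$ for all $i$; in particular $u_1=t$ precisely when $w_{\pi_1}=t$, which by construction is precisely when the lantern $u_{R(\bar w)}$ is an $(F_{\pi_1}^{-1})^{\mathcal{J}}$-successor of $t$. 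Since the assignment $\bar u\mapsto\bar w\mapsto u_{R(\bar w)}$ is injective, the $\mathcal{ALCQI}$-counting in $\mathbb{T}(B)$ counts exactly the tuples counted by $B$.

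For the forward direction I would take $k$ (or more) distinct witnessing tuples $\bar u$ for $B$ at $t$, pass to their lanterns $l=u_{R(\bar w)}$, and check that each satisfies $\neg C_{dom}\sqcap\chi_R\sqcap Outdeg_n\sqcap\bigsqcap_{i=2}^{n}\exists F_{\pi_i}.\mathbb{T}(C_i)$ — the last conjunct because the unique $F_{\pi_i}$-successor of $l$ is $w_{\pi_i}=u_i$, which lies in $C_i^{\mathcal{I}}$ and hence, by the induction hypothesis, in $\mathbb{T}(C_i)^{\mathcal{J}}$. For the converse I would take $k$ (or more) distinct lanterns $l$ witnessing $\mathbb{T}(B)$ at $t$: $\neg C_{dom}$ forces $l\in L$, $\chi_R$ forces $l=u_{R(\bar u)}$ with $\bar u\in R^{\mathcal{I}}$, the edge $(l,t)\in F_{\pi_1}^{\mathcal{J}}$ forces $u_{\pi_1}=t$, and each $\exists F_{\pi_i}.\mathbb{T}(C_i)$ together with the uniqueness of $F_{\pi_i}$-successors (observation (b), i.e.\ $Outdeg_n$) forces $u_{\pi_i}\in\mathbb{T}(C_i)^{\mathcal{J}}$, whence $u_{\pi_i}\in C_i^{\mathcal{I}}$ by the induction hypothesis; the $\pi$-image $\bar v$ of $\bar u$ then lies in $(R^{\pi})^{\mathcal{I}}$ with $v_1=t$ and $v_i\in C_i^{\mathcal{I}}$. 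Applying the equivalence to $B=C$ and to any $a\in C^{\mathcal{I}}$ finishes the proof, since then $a\in\mathbb{T}(C)^{\mathcal{J}}$ while $a\in C_{dom}^{\mathcal{J}}$.

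I expect the only real obstacle to be keeping the index maps $\pi_i$ consistent throughout — in particular making sure that ``the first coordinate of a $\pi$-permuted $R$-tuple equals $t$'' is faithfully rendered as ``the reifying lantern is reached from $t$ by $F_{\pi_1}^{-1}$'', and checking the degenerate binary case $n=2$, where $p$ and $s$ both act as the ordinary inverse. There is no genuine combinatorial difficulty here; the mathematical content of the lemma is entirely in having chosen the reification together with $\chi_R$ and $Outdeg_n$ correctly, which is already fixed by the setup.
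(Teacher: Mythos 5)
Your proposal is correct and follows essentially the same route as the paper: the same reified model $\mathcal{J}$ with lantern elements, the same induction on subconcepts establishing $u\in B^{\mathcal{I}}\Leftrightarrow u\in\mathbb{T}(B)^{\mathcal{J}}$ for $u\in C_{dom}^{\mathcal{J}}$, and the same permutation bookkeeping in the counting case. Your explicit remarks on the injectivity of the tuple-to-lantern correspondence and on verifying $\chi_R$ and $Outdeg_n$ at lanterns only make precise what the paper leaves implicit.
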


For the remaining direction from $\mathcal{ALCQI}$ to $\mathcal{ALCQP}(p,s)$, we define a variant of the standard unraveling (see, e.g., \cite{DBLP:books/cu/BlackburnRV01}) of models.
The unraveling procedure is needed for the following reason.
Let $\mathcal{I}$ be a model of $C_{dom} \sqcap  \mathbb{T}(C)$,
and let $l_1, l_2 \in \Delta^\mathcal{I}$ be
distinct lantern points for $R$, meaning that $l_1$ and $l_2\not= l_1$ satisfy
 $$\neg C_{dom} \sqcap \chi_R \sqcap Outdeg_n  \sqcap \exists F_{\pi_2} . \mathbb{T}(C_2) \sqcap 
\ldots \sqcap   \exists F_{\pi_n} .\mathbb{T}(C_{n}).$$
In an ideal situation, $l_1$ and $l_2$ represent distinct $n$-tuples,
but it is possible that they in fact encode the 
same tuple $(u_1,\dots , u_n)$, that is, $(l_1, u_i)$ and $(l_2, u_i)$ are in  $F_i^\mathcal{I}$ for
all $i \in \{1,\ldots, n \}$.
Working with an unraveling of $\mathcal{I}$ would fix this problem, but in the standard unraveling (with inverses), the in and out-degrees of points with respect to roles are not generally preserved.
For this reason, we define a suitable unraveling that preserves the in and out-degrees.

In the following, for a signature $\sigma$ with at most binary role
symbols, we call a $\sigma$-structure $\mathcal{I}$ \textit{tree-like} if 
$(\Delta^\mathcal{I}, E)$ is a tree, where $E$ is the
union of all the roles $R^\mathcal{I}$, $R \in \sigma$, and their inverses $(R^{-})^\mathcal{I}$.

\paragraph{Graded unraveling (g-unraveling).}
Let $\sigma$ be a
signature with at most binary relations.
Let $\mathcal{I}$ be a $\sigma$-interpretation and $r$
an element in the domain of $\mathcal{I}$.
The \textit{g-unraveling} of $\mathcal{I}$ from $r$ is the tree-like structure $\mathcal{I}^*$
defined as follows.

Let $\mathcal{E}_\sigma = \mathbf{R}_\sigma \cup \{ R^{-} \mid R \in \mathbf{R}_\sigma \}$, where $\mathbf{R}_\sigma$ consists of all role names in $\sigma$.
The domain of $\mathcal{I}^*$ consists of the singleton tuple $(r)$ and all finite tuples (walks) of the form  $(u_0,E_1,u_1,\ldots,E_n, u_n)$
such that the following conditions hold.
\begin{enumerate}
    \item 
$u_0 = r$.
\item
$u_i \in \Delta^\mathcal{I}$ 
and 
$E_i \in \mathcal{E}_\sigma$ for each $i\in \{0,\dots , n\}$.
\item
$(u_i,u_{i+1})\in E^\mathcal{I}_{i+1}$ for each $i\in \{0,\dots , n-1\}$.
\item
The following does not hold:
\begin{align*}
u_{i-1}= u_{i+1}\text{ and }E_i
\text{ is the inverse of }E_{i+1}.
\end{align*}
\end{enumerate}
Intuitively, traveling back and forth between two elements through a role $R$ and its inverse $R^-$ is not allowed, e.g.,
$uEvE^-u$ and $uE^-vEu$ are forbidden parts of walks.

%

%
For all elements $(r,\ldots, E_n, u_n) \in \Delta^\mathcal{I^*}$ (including $(r) \in \Delta^\mathcal{I^*}$) and all
atomic concepts $A \in \sigma$, we set $(r,\ldots,E_n, u_n) \in A^\mathcal{I^*}$ if $u_n \in A^\mathcal{I}$.
For all role symbols $R\in \mathbf{R}_\sigma$ and all elements $\overline{s},\overline{s}'\in \Delta^\mathcal{I^*}$, we set $(\overline{s},\overline{s}')
\in R^{\mathcal{I}^*}$ if one of the following conditions hold.
\begin{enumerate}
    \item 
    $\overline{s}'$ extends $\overline{s}$ by $(R,u)$ for some $u\in \Delta^{\mathcal{I}}$, i.e.,
    $\overline{s}' =  (\overline{s},R,u)$.
    \item
    $\overline{s}$ extends $\overline{s}'$ by $(R^-,u)$ for some $u\in \Delta^{\mathcal{I}}$, i.e.,
    $\overline{s} =  (\overline{s}',R^-,u)$.
\end{enumerate}  
We define the \emph{canonical map} $f:\Delta^{\mathcal{I}^*}\rightarrow \Delta^{\mathcal{I}}$ such
that $f(\overline{s})$ is the last element of $\overline{s}$ for all $\overline{s}\in \Delta^{\mathcal{I}^*}$. The map is clearly a surjection. It is easy to show by induction on concepts that each $\overline{s}\in \Delta^{\mathcal{I}^*}$ and $f(\overline{s})$ satisfy the same $\mathcal{ALCQI}$-concepts.
That is, the following holds.

\begin{lemma} \label{unravelling_lemma}
Let $\mathcal{I}$ be an interpretation and
$\mathcal{I}^*$ the g-unraveling of $\mathcal{I}$ from $r$. Let $f$ the related canonical map and $\overline{s}$ an element of $\mathcal{I}^*$.
Then we have $f(\overline{s})\in C^{\mathcal{I}}$ iff $\overline{s}
\in C^{\mathcal{I}^*}$ for
all $\mathcal{ALCQI}$-concepts $C$.
\end{lemma}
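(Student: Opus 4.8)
The plan is to prove the lemma by induction on the structure of the $\mathcal{ALCQI}$-concept $C$, showing simultaneously for \emph{every} $\overline{s}\in\Delta^{\mathcal{I}^*}$ that $f(\overline{s})\in C^{\mathcal{I}}$ iff $\overline{s}\in C^{\mathcal{I}^*}$. The atomic case is immediate from the definition of $\mathcal{I}^*$, since we set $\overline{s}\in A^{\mathcal{I}^*}$ precisely when its last element $f(\overline{s})$ lies in $A^{\mathcal{I}}$; the constant concepts $\top,\bot$ are trivial, and the Boolean cases ($\neg$, $\sqcap$) follow directly from the induction hypothesis since $f$ is a single fixed map. So the entire content of the proof is the graded existential restriction case ${\geq}k\,R.D$ (and, by the $\mathcal{ALCQI}$ syntax, also its inverse-role variant ${\geq}k\,R^-.D$), which amounts to showing that the canonical map $f$ induces, for each $\overline{s}$, a bijection between the $R^{\mathcal{I}^*}$-successors of $\overline{s}$ (in $\mathcal{I}^*$) and the $R^{\mathcal{I}}$-successors of $f(\overline{s})$ (in $\mathcal{I}$), and symmetrically for $R^-$; once that is in hand, the induction hypothesis applied to $D$ transfers membership counts across $f$ and the claim follows.

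To establish the successor bijection, fix $\overline{s}=(u_0,E_1,\dots,E_m,u_m)\in\Delta^{\mathcal{I}^*}$ with $f(\overline{s})=u_m$. An $R^{\mathcal{I}^*}$-successor of $\overline{s}$ is, by the definition of $R^{\mathcal{I}^*}$, either a one-step extension $(\overline{s},R,u)$ with $(u_m,u)\in R^{\mathcal{I}}$, or the immediate predecessor of $\overline{s}$ in the case $E_m=R^-$. I would argue that the edge-deletion condition (4) in the definition of $\Delta^{\mathcal{I}^*}$ — forbidding $u_{i-1}=u_{i+1}$ together with $E_i$ inverse to $E_{i+1}$ — is exactly strong enough to make this assignment a bijection: on the forward side, distinct successors $u\ne u'$ of $u_m$ in $\mathcal{I}$ give distinct extensions, and the only extension of the form $(\overline{s},R,u)$ that condition (4) could block is the one with $u=u_{m-1}$ \emph{and} $E_m=R^-$, but that is precisely the value covered by the predecessor clause, so nothing is lost and nothing is double-counted. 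The same reasoning, with $R$ and $R^-$ swapped, handles $R^-$-successors. Hence $f$ restricts to a bijection on successor sets for each relevant role, which is the key step.

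With that bijection, the graded case closes routinely: $\overline{s}\in({\geq}k\,R.D)^{\mathcal{I}^*}$ iff there are at least $k$ elements $\overline{t}$ with $(\overline{s},\overline{t})\in R^{\mathcal{I}^*}$ and $\overline{t}\in D^{\mathcal{I}^*}$; by the induction hypothesis $\overline{t}\in D^{\mathcal{I}^*}$ iff $f(\overline{t})\in D^{\mathcal{I}}$, and by the bijection the set of such $f(\overline{t})$ is exactly the set of $R^{\mathcal{I}}$-successors of $f(\overline{s})$ in $D^{\mathcal{I}}$, of the same cardinality; so this holds iff $f(\overline{s})\in({\geq}k\,R.D)^{\mathcal{I}}$. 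The $R^-$ subcase is identical using the inverse-successor bijection. The main obstacle — and the only place requiring genuine care — is verifying that condition (4) yields a \emph{bijection} rather than merely an injection or a surjection: one must check that no legitimate successor of $f(\overline{s})$ in $\mathcal{I}$ fails to be realized in $\mathcal{I}^*$ (surjectivity, where the predecessor clause rescues the one extension killed by condition (4)) and that the predecessor and the extensions do not overlap as successors (injectivity). I would present this as an explicit small case analysis on whether $E_m$ is $R$, is $R^-$, or is neither, keeping track of when the blocked extension coincides with the predecessor.
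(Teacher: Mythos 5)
Your proof is correct and takes exactly the route the paper intends: the paper omits the argument entirely (asserting only that it is ``easy to show by induction on concepts''), and your induction, whose sole substantive step is that condition (4) together with the predecessor clause makes $f$ restrict to a bijection between the $R$- (and $R^-$-) successors of $\overline{s}$ in $\mathcal{I}^*$ and those of $f(\overline{s})$ in $\mathcal{I}$, is precisely the in/out-degree preservation the g-unraveling was introduced to guarantee. The details you supply (including the case split on $E_m$ and the observation that the one extension blocked by condition (4) is exactly the successor recovered through the prefix) are accurate and fill the gap the paper leaves open.
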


Now we are ready to cover the remaining direction concerning
the translation of $\mathcal{ACLQP}(p,s)$ to the binary realm.

\begin{lemma} \label{csat_empty_t_lemma2}
Let $C$ be a concept of $\mathcal{ALCQP}(p,s)$.
If $C_{dom}\sqcap\mathbb{T}(C)$ is satisfiable,
then so is $C$.
\begin{proof}

Assume first that $C_{dom} \sqcap \mathbb{T}(C)$ is satisfiable, so $r 
\in (C_{dom} \sqcap \mathbb{T}(C))^{\mathcal{I}'}$ for some
interpretation $\mathcal{I}'$
and some element $r$ in the domain of $\mathcal{I}'$. Let $\mathcal{I}$ be the g-unraveling of $\mathcal{I}'$ from $r$.
It follows from Lemma \ref{unravelling_lemma} 
that $(r)\in (C_{dom} \sqcap \mathbb{T}(C))^{\mathcal{I}}$.  Based on $\mathcal{I}$,
we will build an interpretation $\mathcal{J}$
such that $(r)^{\mathcal{J}} \in C^{\mathcal{J}}$.

We define the domain of $\mathcal{J}$ to be the 
set $C_{dom}^{\mathcal{I}}$, i.e., $\Delta^{\mathcal{J}} := C_{dom}^{\mathcal{I}}$.
Let $\sigma_p$ be the signature of $C$.
For every atomic concept $A \in \sigma_p$, we set $A^{\mathcal{J}} := (C_{dom} \sqcap A )^\mathcal{I}$.
Every $n$-ary role symbol $R  \in \sigma_p$ is interpreted such that $(u_1,\ldots,u_n)\in R^{\mathcal{J}}$ iff there exists a point $l \in \Delta^{\mathcal{I}}$ such that $(l,u_i) \in F_i^{\mathcal{I}}$ for each $i\in \{1, \dots , n\}$ and we
have $$l \in (\neg C_{dom}\sqcap \chi_R \sqcap \mathit{Outdeg}_n )^{\mathcal{I}}.$$

We then show, by induction on subconcepts of $C$, that the equivalence $$u \in B^\mathcal{J}\ \Leftrightarrow\ u \in (\mathbb{T}(B))^\mathcal{I}$$ holds for all $u\in \Delta^{\mathcal{J}} = C_{dom}^{\mathcal{I}}$
and all subconcepts $B$ of $C$.
For each atomic concept $A$, we have $A^\mathcal{J} = (C_{dom} \sqcap A)^\mathcal{I}$, so the equivalence is clear for $A$.
Assume now that for concepts $C_1, \ldots, C_{n}$ and for all $u \in \Delta^\mathcal{J}$, we have $u \in C_i^\mathcal{J}$ iff $u \in \mathbb{T}(C_i)^\mathcal{I}$. 
Let $t \in \Delta^\mathcal{J}$.
The Boolean cases of the induction are straightforward:

\medskip

\noindent Conjunction:
$t \in \mathbb{T}(C_1 \sqcap C_2)^\mathcal{I} \Leftrightarrow (t \in \mathbb{T}(C_1)^\mathcal{I}$ 
and $t \in \mathbb{T}(C_2)^\mathcal{I}) \overset{\text{ind. hypot.}}{\Leftrightarrow} (t \in C_1^\mathcal{J}$ and $t \in C_2^\mathcal{J}) \Leftrightarrow t \in (C_1 \sqcap C_2)^\mathcal{J}.$

\medskip

\noindent Negation:
$ t \in  \mathbb{T}(\neg C_1) ^\mathcal{I} \Leftrightarrow t \not \in \mathbb{T}(C_1)^\mathcal{I} \overset{\text{ind. hypot.}}{ \Leftrightarrow }
t \not \in C_1^\mathcal{J} \Leftrightarrow t \in (\neg C_1)^\mathcal{J}.$

\medskip

\noindent Assume then that 
$t \in (\, {\geq} k R^{\pi} . (C_2,\ldots, C_{n})\, )^\mathcal{J}$.
Therefore there exist at least $k$ tuples
$(t_2,\dots , t_n) \in (\Delta^{\mathcal{J}})^{n-1}$ 
%
%
%
such that $(t,t_2,\dots , t_n)\in (R^{\pi})^\mathcal{J}$
and $t_i \in C_i^{\mathcal{J}}$
for each $i\in \{2,\dots, n\}$. 
Thus, due to the way $\mathcal{J}$ is
defined from $\mathcal{I}$, there exist at least $k$ points $l$ in $\mathcal{I}$ such that $(l,t_i) \in F_{\pi_i}^{\mathcal{I}}$ for each $i\in \{2,\dots, n\}$ and $(l,t) \in F_{\pi_1}^{\mathcal{I}}$. Furthermore, we have $l \in (\neg C_{dom}\sqcap\, \chi_R\, \sqcap \mathit{Outdeg}_n )^{\mathcal{I}}.$ By the induction hypothesis, we have $t_i\in \mathbb{T}(C_i)^{\mathcal{I}}$ for each $i\in \{2,\dots , n\}$.
Therefore, we have 
$$l \in \big(\neg C_{dom} \sqcap \chi_R 
\sqcap Outdeg_n \sqcap \exists F_{\pi_2} . \mathbb{T}(C_2)  \sqcap 
\ldots \sqcap   \exists F_{\pi_n} .\mathbb{T}(C_n) \, \big)^\mathcal{I}.$$ 
Thus $t\in \mathbb{T}(\, {\geq} k R^{\pi} . (C_2,\ldots, C_{n})\, )^\mathcal{I}$.

For the converse, assume that we have $t\in \mathbb{T}(\, {\geq} k R^{\pi} .
(C_2,\ldots, C_{n})\, )^\mathcal{I}$. Then there exist at least $k$ points $l$ in $\mathcal{I}$ such that $(t,l)\in
(F_{\pi_1}^{-1})^{\mathcal{I}}$ and 
$$l \in \big(\neg C_{dom} \sqcap \chi_{R} 
\sqcap Outdeg_n \sqcap \exists F_{\pi_2} . \mathbb{T}(C_{\pi_2})  \sqcap 
\ldots \sqcap   \exists F_{\pi_n} .\mathbb{T}(C_{\pi_n}) \, \big)^\mathcal{I}.$$
By the definition of roles in $\mathcal{J}$, each of the
points $l$ defines a tuple $(t,u_2,\dots , u_n) \in (R^{\pi})^{\mathcal{J}}$
such that $u_i \in \mathbb{T}(C_i)^{\mathcal{I}}$ for all $i \in \{2,\dots , n\}$. Because $\mathcal{I}$ is an unraveling, we see that at least $k$ of the tuples $(t,u_2,\dots , u_n)$
are indeed different tuples.
By the induction hypothesis, we have $u_i \in C_i^{\mathcal{J}}$ for each $i \in \{2,\dots , n\}$. Thus $t \in (\, {\geq} k R^{\pi} . (C_2,\ldots, C_{n})\, )^\mathcal{J}$.
\end{proof}
\end{lemma}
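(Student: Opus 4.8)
\textbf{Proof plan for Lemma \ref{csat_empty_t_lemma2}.}

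The plan is to take a model $\mathcal{I}'$ of $C_{dom}\sqcap\mathbb{T}(C)$, replace it by its g-unraveling $\mathcal{I}$ from the satisfying point $r$ (preserving the satisfaction of $C_{dom}\sqcap\mathbb{T}(C)$ at the root $(r)$ by Lemma \ref{unravelling_lemma}), and then ``de-reify'' $\mathcal{I}$ to build an $\mathcal{ALCQP}(p,s)$-interpretation $\mathcal{J}$ whose domain is $C_{dom}^{\mathcal{I}}$: an $n$-tuple $(u_1,\dots,u_n)$ is put into $R^{\mathcal{J}}$ exactly when some lantern point $l$ of $\mathcal{I}$ (a point satisfying $\neg C_{dom}\sqcap\chi_R\sqcap\mathit{Outdeg}_n$) has $(l,u_i)\in F_i^{\mathcal{I}}$ for all $i$. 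Atomic concepts are transferred by $A^{\mathcal{J}}:=(C_{dom}\sqcap A)^{\mathcal{I}}$. The goal is then the equivalence $u\in B^{\mathcal{J}}\Leftrightarrow u\in(\mathbb{T}(B))^{\mathcal{I}}$ for every subconcept $B$ of $C$ and every $u\in\Delta^{\mathcal{J}}$, which at $B=C$ and $u=(r)$ gives satisfiability of $C$.

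I would prove the equivalence by induction on the structure of $B$. The atomic case is immediate from the definition of $A^{\mathcal{J}}$, and the Boolean cases ($\sqcap$, $\neg$) are routine, using that $\Delta^{\mathcal{J}}=C_{dom}^{\mathcal{I}}$ so that complementation inside $\mathcal{J}$ matches complementation relativized to $C_{dom}$ inside $\mathcal{I}$. The heart is the case $B={\geq}k\,R^{\pi}.(C_2,\dots,C_n)$. For the forward direction, from $t\in B^{\mathcal{J}}$ we obtain $\geq k$ tuples $(t,t_2,\dots,t_n)\in(R^{\pi})^{\mathcal{J}}$ with $t_i\in C_i^{\mathcal{J}}$; unwinding the definition of $R^{\mathcal{J}}$ and the bookkeeping of the permutation indices $\pi_i$ (recall $\pi_i$ is the source coordinate landing in position $i$, so the ``head'' coordinate of $R^{\pi}$ is $\pi_1$), each such tuple yields a lantern point $l$ with $(l,t)\in F_{\pi_1}^{\mathcal{I}}$, $(l,t_i)\in F_{\pi_i}^{\mathcal{I}}$, and $l\in(\neg C_{dom}\sqcap\chi_R\sqcap\mathit{Outdeg}_n)^{\mathcal{I}}$; the induction hypothesis upgrades $t_i\in C_i^{\mathcal{J}}$ to $t_i\in\mathbb{T}(C_i)^{\mathcal{I}}$, so $l$ witnesses the existential $F_{\pi_i}$-requirements and $(t,l)\in(F_{\pi_1}^{-1})^{\mathcal{I}}$, giving $t\in\mathbb{T}(B)^{\mathcal{I}}$. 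The converse direction runs the same correspondence backwards: from $\geq k$ lantern points $l$ reachable from $t$ via $F_{\pi_1}^{-1}$ and satisfying the big conjunction, read off tuples $(t,u_2,\dots,u_n)\in(R^{\pi})^{\mathcal{J}}$ with $u_i\in\mathbb{T}(C_i)^{\mathcal{I}}$, apply the induction hypothesis to get $u_i\in C_i^{\mathcal{J}}$, and conclude $t\in B^{\mathcal{J}}$.

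The one genuinely delicate point---and the reason the g-unraveling is used rather than an arbitrary model---is \emph{injectivity of the lantern-to-tuple correspondence}: we must know that $k$ distinct qualifying lantern points $l$ give rise to $k$ \emph{distinct} tuples $(t,u_2,\dots,u_n)$, since in an arbitrary model two different lanterns could encode the same tuple and the count would collapse. Here I would argue that in the g-unraveling, a point $l$ adjacent to $t$ via some $F_j^{-1}$ lies on a unique branch, and the combination $\mathit{Outdeg}_n$ (which forces each $F_i$ functional into $C_{dom}$) together with the tree shape forces the tuple of $F$-successors of $l$ to determine $l$; more carefully, any $F_i$-edge from a lantern $l$ in the unraveling either goes ``downward'' to a fresh child (in which case distinct children of distinct $l$'s are distinct) or ``upward'' to the parent, and $\mathit{Outdeg}_n$ rules out two such upward edges, so two distinct lanterns cannot share all $n$ of their $F_i$-successors. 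This is precisely the obstruction flagged in the discussion preceding the graded-unraveling definition, and getting this counting step airtight---rather than the induction itself---is where the real work lies; once it is in place, the $\mathcal{ALCQI}$-to-$\mathcal{ALCQP}(p,s)$ direction, and hence (with Lemma \ref{csat_empty_t_lemma1}) the polynomial reduction and the \textsc{PSpace}-completeness claim, follows.
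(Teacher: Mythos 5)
Your proposal is correct and follows essentially the same route as the paper: g-unravel the model of $C_{dom}\sqcap\mathbb{T}(C)$, de-reify it into $\mathcal{J}$ over $C_{dom}^{\mathcal{I}}$ with exactly the same definition of $R^{\mathcal{J}}$, and prove $u\in B^{\mathcal{J}}\Leftrightarrow u\in(\mathbb{T}(B))^{\mathcal{I}}$ by induction, with the counting case handled via the lantern--tuple correspondence. In fact you spell out the injectivity of that correspondence (distinct lanterns encode distinct tuples, thanks to the tree shape plus $\mathit{Outdeg}_n$) in more detail than the paper, which simply asserts it from the fact that $\mathcal{I}$ is an unraveling.
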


We have now proven the following result. 
\begin{theorem}
Concept satisfiability for $\mathcal{ALCQP}(p,s)$ is \textsc{PSpace}-complete.

\begin{proof}
It follows from Lemmas \ref{csat_empty_t_lemma1} and \ref{csat_empty_t_lemma2} that 
concept satisfiability for the logic $\mathcal{ALCQP}(p,s)$ can be reduced in polynomial time
to the corresponding problem for $\mathcal{ALCQI}$, which is
known to be \textsc{PSpace}-complete (see \cite{DingHaWiDL2007} for the upper bound).  The lower
bound is immediate as $\mathcal{ALCQP}(p,s)$ contains $\mathcal{ALCQI}$ as a fragment essentially.
\end{proof}

\end{theorem}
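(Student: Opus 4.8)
The plan is to establish \textsc{PSpace}-completeness by sandwiching $\mathcal{ALCQP}(p,s)$ between $\mathcal{ALCQI}$ (for the lower bound) and, via the translation $\mathbb{T}$, reducing back to $\mathcal{ALCQI}$ (for the upper bound). For the \emph{lower bound}, I would observe that a binary role $R$ of $\mathcal{ALCQI}$ can be simulated directly as a binary role of $\mathcal{ALCQP}(p,s)$, with the inverse $R^-$ of $\mathcal{ALCQI}$ corresponding to $R^s$ (equivalently $R^p$), since over binary relations the swap and cyclic-permutation operators both coincide with the standard inverse operator, as noted in the preliminaries. Qualified number restrictions ${\geq}k\,R.C$ of $\mathcal{ALCQI}$ translate to ${\geq}k\,R.(C)$ in the binary case of $\mathcal{ALCQP}(p,s)$, and the Boolean constructors match verbatim. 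Hence concept satisfiability for $\mathcal{ALCQI}$ reduces trivially (in linear time, indeed essentially by the identity map) to concept satisfiability for $\mathcal{ALCQP}(p,s)$, and since the former is \textsc{PSpace}-hard we inherit \textsc{PSpace}-hardness.

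For the \emph{upper bound}, I would use the reification translation $\mathbb{T}$ and the two lemmas already proved. The combined statement of Lemma~\ref{csat_empty_t_lemma1} and Lemma~\ref{csat_empty_t_lemma2} is exactly that an $\mathcal{ALCQP}(p,s)$-concept $C$ is satisfiable if and only if the $\mathcal{ALCQI}$-concept $C_{dom}\sqcap\mathbb{T}(C)$ is satisfiable. So it remains only to check that the map $C\mapsto C_{dom}\sqcap\mathbb{T}(C)$ is computable in polynomial time and produces a genuine $\mathcal{ALCQI}$-concept. The target is an $\mathcal{ALCQI}$-concept because $\mathbb{T}$ replaces every higher-arity role restriction by an expression built from binary roles $F_i$, their inverses $F_{\pi_1}^{-1}$, qualified number restrictions, conjunctions, negations, and the auxiliary concepts $L_R$, $C_{dom}$, each of which is available in $\mathcal{ALCQI}$; the subconcept $Outdeg_n$ likewise only uses functionality-style number restrictions and universal restrictions over binary roles, so it is itself an $\mathcal{ALCQI}$-concept. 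For the size and time bound, one notes that $\mathbb{T}$ is defined by structural recursion on the (polynomially many) subconcepts of $C$, each clause introduces only a constant number of new operators plus one copy of $Outdeg_n$ (whose size is linear in $n$, hence linear in $|C|$), and the indices $\pi_i$ depend only on $\pi$ and can be computed from the permutation string in polynomial time; the numerical parameters $k$ are copied verbatim in binary. Thus $|C_{dom}\sqcap\mathbb{T}(C)|$ is polynomial in $|C|$ and the translation runs in polynomial time.

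Putting these together: concept satisfiability for $\mathcal{ALCQP}(p,s)$ polynomially reduces to concept satisfiability for $\mathcal{ALCQI}$, which is in \textsc{PSpace} by the known upper bound (see \cite{DingHaWiDL2007}), and it is \textsc{PSpace}-hard by the lower-bound argument above, so it is \textsc{PSpace}-complete.

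The only mildly delicate point I anticipate is making sure the reduction is genuinely \emph{polynomial} rather than merely computable: one must confirm that the blow-up from repeatedly inserting $Outdeg_n$ and the various $\chi_R$, $C_{dom}$ conjuncts stays polynomial (it does, since each of the polynomially many subconcepts contributes at most $O(|C|)$ extra symbols, giving an $O(|C|^2)$ bound at worst), and that nothing in $\mathbb{T}$ causes exponential duplication of subconcepts. Since $\mathbb{T}$ recurses once into each immediate subconcept without duplicating it, there is no such blow-up. Everything else is bookkeeping, and the heavy lifting—the correctness of the reification—has already been discharged in Lemmas~\ref{csat_empty_t_lemma1} and~\ref{csat_empty_t_lemma2}.
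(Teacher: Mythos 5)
Your proposal is correct and follows essentially the same route as the paper: the lower bound by observing that $\mathcal{ALCQI}$ embeds into $\mathcal{ALCQP}(p,s)$ (with $R^s$ or $R^p$ playing the role of the inverse on binary roles), and the upper bound by combining Lemmas~\ref{csat_empty_t_lemma1} and~\ref{csat_empty_t_lemma2} into a polynomial-time reduction to $\mathcal{ALCQI}$ concept satisfiability. Your explicit verification that $\mathbb{T}$ outputs a genuine $\mathcal{ALCQI}$-concept of polynomial size is a useful elaboration of a step the paper only asserts, but it is not a different argument.
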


As (a small selection of many) 
examples of particular systems of interest that extend $\mathcal{ALCQP}(p,s)$, we mention here the logics
$\mathcal{ALCQP}(p,s,\cup,\cap,\setminus)$ and $\mathcal{ALCQP}(p,s,I,\cup,\cap,\setminus)$.
As further interesting related system, we mention 
$\mathcal{ALCQP}(p,s,\neg,\cap)$,
$\mathcal{ALCQP}(p,s,I,\neg,\cap)$, $\mathcal{ALCQP}(e,p,s,\neg,\cap)$
and also $\mathcal{ALCQP}(e,p,s,I,\neg,\cap)$, which relates
rather closely to $\mathcal{DL}_{FU_1}$ from \cite{DBLP:conf/dlog/Kuusisto16}.
These systems extend nicely the basic polyadic Boolean modal logic $\mathcal{ALCP}(\neg,\cap)$.

The mentioned systems (and of course various other polyadic modal logics with different operators) relate
nicely to binary modal logics. This link concerns the complexities of the reasoning problems, but also 
many other issues, for example expressivity. For instance, obtaining expressivity characterizations is 
easy. Consider, for example, $\mathcal{ALCQP}(p,s,\cup,\cap,\neg)$. To characterize 
expressivity with formulae of modal nesting depth $k$ and counting up to the positive integer $p$, we
can define the following game. Firstly, for the duplicator to survive the zeroeth round, the two 
elements $w$ and $w'$ must satisfy the same propositions. Then, the game moves are as follows.

\begin{enumerate}
\item
The spoiler chooses $n\leq p$ tuples of length $m$ that 
originate from the current point $w$ in one model.
The tuples must satisfy the same role type (including permutations, but
repetitions of points can be ignored).
\item
The duplicator must respond with $n$ corresponding tuples originating 
from the current point of the other model. The tuples must match the role type of
the tuples of the spoiler (including permutations but without caring about repetitions).
\item
The spoiler chooses a point from one of the tuples.
\item
The duplicator must respond by a point from a chosen tuple in the 
other model (and that tuple must have the same coordinate index as the 
element chosen by the spoiler).
\end{enumerate}

These new points give the new position, and the same propositions must be satisfied.
The game ends when round $k$ finishes (unless the duplicator has lost even before that by
not being able to respond to the spoiler's moves).

\section{An algebra for \texorpdfstring{$\mathcal{ALC}$}{ALC} }

In this section we show how to define $\mathcal{ALC}$ (or standard multimodal 
logic) as a general relation algebra. This result is very easy to obtain, the purpose for proving it is simply to illustrate the uses of general relation algebras with a simple and concrete example. In general, directly algebraic approaches can have advantages over
ones based on polyadic modal logic. 
We first define some operators. The interested reader may compare these operators to the ones defined in \cite{alg2021} and \cite{ordered}.
Let us begin by defining the \emph{suffix intersection operator} $\dot\cap$
from \cite{alg2021}. 
Let $t_1$, $t_2$ be
terms and $t_1^{\mathcal{I}}$, $t_2^{\mathcal{I}}$ their interpretations. Suppose the arities of $t_1$
and $t_2$ are $k$ and $\ell$, respectively (note that the arity of a term $t$ is simply the arity of the relation $t^{\mathcal{I}}$ in an arbitrary interpretation $\mathcal{I}$.
Call $m = \mathit{max}(k,\ell)$.
Then the interpretation $(t_1\, \dot\cap\, t_2)^{\mathcal{I}}$ of the term $t_1\, \dot\cap\, t_2$ is 
\[\{\, (a_1,\dots ,a_m)\, |\, (a_{m - k+1},
\dots ,a_m) \in t_1^{\mathcal{I}}
\text{ and } (a_{m - \ell +1},
\dots ,a_m) \in t_2^{\mathcal{I}}\, \} . \]
Note above that if $k$ or $\ell$ is zero,
then $(a_{m+1}, a_m)$ denotes the empty tuple $\epsilon$.
Note that $\epsilon \in t^{\mathcal{I}}$ corresponds intuitively to $t$ being true.

Let us then define the \emph{one-dimensional projection} operator $\exists_1$ as follows. 
If the arity of $t^{\mathcal{I}}$ is at most one,
then $(\exists_1 t)^\mathcal{I} = t^{\mathcal{I}}$, and
otherwise $(\exists_1 t)^\mathcal{I}$ is
\[ \{\ a_1\ |\ 
     (a_1,\dots , a_k) \in    
     t^{\mathcal{I}}\text{ for some }a_2, \dots , a_k \text{ in }\mathcal{I}\ \}.\]

Let us also define the \emph{unary intersection} operator $\cap_1$ as follows.
\begin{enumerate}
\item
If at least one of $t_1^{\mathcal{I}}$
and $t_2^{\mathcal{I}}$ has arity at most one, then
$(t_1\, \cap_1\, t_2)^{\mathcal{I}}
=
\bigl(\exists_1(t_1\, \dot\cap\, t_2)\bigr)^{\mathcal{I}}$.
\item
Otherwise $(t_1\, \cap_1\, t_2)^{\mathcal{I}}
= \emptyset_1$, i.e., the unary empty relation.
\end{enumerate}

Finally, let us define the \emph{unary negation} operator $\neg_1$ such that $(\neg_1 t)^{\mathcal{I}}$ is
the complement of $t^{\mathcal{I}}$ 
if the arity of $t^{\mathcal{I}}$ is at most one, and otherwise $(\neg_1 t)^\mathcal{I}$ is $\emptyset_1$.

Now, for any $\mathrm{GRA}(f_i)_{i\in I}$, we
let $\mathrm{GRA}_k(f_i)_{i\in I}$ denote the 
subsystem of $\mathrm{GRA}(f_i)_{i\in I}$ where
every atomic relation symbol in every term
has arity at most $k$.

We then show that $\mathrm{GRA}_2(\neg_1, \cap_1)$ and 
$\mathcal{ALC}$ with roles are
equiexpressive. This means that $\mathrm{GRA}_2(\neg_1, \cap_1)$ can define precisely the concepts \emph{and roles}\footnote{We note that $\mathcal{ALC}$ of course has no other roles than atomic ones.} of $\mathcal{ALC}$, and vice versa, the unary (respectively, binary) relations definable by $\mathrm{GRA}_2(\neg_1, \cap_1)$ correspond to concepts (respectively, roles) of $\mathcal{ALC}$. 
We assume the underlying vocabulary contains $\bot$ and $\top$ as built-in unary relation symbols corresponding to the unary empty set $\emptyset_1$ and full domain.\footnote{This assumption is not necessary if the underlying vocabulary
contains at least one concept symbol.} Thus $\top$ and $\bot$ are
available as unary relations in both $\mathcal{ALC}$ and the algebra.
We also assume that the underlying vocabulary contains only unary and binary relation symbols, so
nullary relation symbols are excluded, as they are not present in standard $\mathcal{ALC}$.


\begin{proposition}
$\mathrm{GRA}_2(\neg_1, \cap_1)$ and
$\mathcal{ALC}$ with roles are
equiexpressive.
\end{proposition}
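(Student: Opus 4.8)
The plan is to establish the two directions of equiexpressivity by mutual translation, proceeding by induction on the structure of concepts/roles on one side and terms on the other. The key observation is that $\mathrm{GRA}_2(\neg_1,\cap_1)$ manipulates only unary and binary relations, and that the operators $\neg_1$ and $\cap_1$ are designed precisely so that, when restricted to unary arguments, they behave as Boolean complement and intersection on the domain, while on binary (or mixed) arguments they produce exactly the ``project then combine'' behaviour that corresponds to an $\mathcal{ALC}$ existential restriction once a projection is implicit.

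First I would set up the correspondence on atomic symbols: a unary relation symbol $A$ corresponds to the $\mathcal{ALC}$ concept $A$, a binary relation symbol $R$ corresponds to the $\mathcal{ALC}$ role $R$, and $\top,\bot$ match on both sides by assumption. Then for the direction from $\mathcal{ALC}$ into the algebra, I would translate concepts recursively: $\neg C \mapsto \neg_1(\widehat C)$, $C_1 \sqcap C_2 \mapsto \widehat{C_1} \cap_1 \widehat{C_2}$, and the crucial case $\exists R.C \mapsto \widehat R \cap_1 \widehat C$. One checks that $(\widehat R \cap_1 \widehat C)^{\mathcal{I}}$, by definition of $\cap_1$ via $\exists_1$ and $\dot\cap$, equals $\{a_1 \mid \exists a_2\, (a_1,a_2)\in R^{\mathcal{I}} \text{ and } a_2 \in C^{\mathcal{I}}\}$, which is exactly $(\exists R.C)^{\mathcal{I}}$ --- here it matters that $\dot\cap$ aligns arguments by suffix, so the shared coordinate is the last (the ``successor'') coordinate of $R$, and $\exists_1$ projects onto the first. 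Since $\mathcal{ALC}$ has only atomic roles, roles translate trivially. For the converse direction, I would argue by induction on terms of $\mathrm{GRA}_2(\neg_1,\cap_1)$, showing simultaneously that every term has arity at most $2$ (atomic terms have arity $\leq 2$ by the subscript-$2$ restriction, and $\neg_1,\cap_1$ always output arity $\leq 1$ unless one returns $\emptyset_1$, so in fact every non-atomic term has arity $\leq 1$), that every binary term is (interpreted as) an atomic role, and that every unary term is definable as an $\mathcal{ALC}$ concept. The non-atomic cases reduce to the three $\mathcal{ALC}$ constructors: $\neg_1$ of a unary term is negation, $\neg_1$ of a binary term is $\emptyset_1 = \bot$; $\cap_1$ of two unary terms is conjunction; $\cap_1$ of a binary (atomic role) term with a unary term is an existential restriction as computed above; and $\cap_1$ of two binary terms is $\emptyset_1 = \bot$.

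The main subtlety --- not really an obstacle, but the point requiring care --- is bookkeeping the arities and making sure the ``degenerate'' cases collapse correctly: that $\neg_1$ and $\cap_1$ genuinely never produce binary outputs from the available inputs, so the only binary relations floating around are atomic roles, and hence no nesting of roles (which $\mathcal{ALC}$ cannot express) ever arises. One must also verify the edge cases where a term like $R \cap_1 S$ for two role names correctly yields $\bot$ rather than something spurious, and that $\exists_1$ applied to an already-unary relation is the identity so that iterated applications are harmless. A second small point is the treatment of $\top,\bot$: the stated assumption that they are built-in unary symbols on both sides makes the base case of each induction immediate, and the footnoted alternative (presence of a concept symbol) would only require replacing $\top$ by $A \sqcup \neg A$-style definitions, which is not needed. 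Once the two translations are shown to preserve interpretations over every model $\mathfrak{A}$, equiexpressivity follows by definition.
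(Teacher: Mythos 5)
Your proposal is correct and follows essentially the same route as the paper: an explicit translation sending $\exists R.C$ to $R \,\cap_1\, T(C)$, and a converse translation by induction on terms using the arity bookkeeping (non-atomic terms are unary, binary terms are atomic roles) to map $\neg_1$ and $\cap_1$ back to negation, conjunction, $\bot$, or an existential restriction. Your explicit semantic verification of the $\exists R.C$ case and the suffix-alignment observation only spell out details the paper leaves implicit.
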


\begin{proof}
The translation $T$ from $\mathcal{ALC}$ 
to the algebra is as follows.

\begin{enumerate}
\item
$T(R) = R$ \text{ for an atomic role $R$}
    \item
    $T(A) = A$ \text{ for an atomic concept $A$}
    \item
    $T(\top) = \top$ and $T(\bot) = \bot$ 
    \item
    $T(\varphi\sqcap \psi)
    = T(\varphi)\, \cap_1\, T(\psi)$
    \item
    $T(\neg \varphi) = \neg_1 T(\varphi)$
    \item
    $T(\exists R. \varphi) = R \, \cap_1\, 
    T(\varphi)$
\end{enumerate}

We then consider the inverse translation. Note
that in $\mathrm{GRA}(\neg_1, \cap_1)$, it is
clear that the arity of $t^{\mathcal{I}}$
equals the arity of $t^{\mathcal{J}}$ for any
interpretations $\mathcal{I}$ and $\mathcal{J}$. (This essentially stems from both of the operators being \emph{arity regular}, i.e., operators such that the arity of an output relation is completely determined by the arities of the input relations). Therefore we can talk about the arity of a term $t$, and it is guaranteed to be equal to the arity of the corresponding relation $t^{\mathcal{I}}$ in any interpretation $\mathcal{I}$.
The translation from 
$\mathrm{GRA}_2(\neg_1, \cap_1)$ to
$\mathcal{ALC}$ with roles is as follows.
\begin{enumerate}
\item
$S(R) = R$ \text{for an atomic role $R$}
    \item
    $S(A) = A$ \text{for an atomic concept $A$}
    \item
    $S(\top) = \top$ and $S(\bot) = \bot$
    \item
    $S(\neg_1 t) = \begin{cases}\neg S(t) & \text{if the arity of $t$ is one}\\
    \bot & \text{otherwise}
    \end{cases}$
    \item
    $S(t_1\, \cap_1\, t_2)
    = \begin{cases}S(t_1)\, \sqcap\, S(t_2)&\text{ if $t_1$ and
                                                  $t_2$ are unary}\\
                                                  \bot &\text{ if $t_1$ and
                                                  $t_2$ are binary}\\
                                                  \exists R. S(t) & \text{ if one term $t\in \{t_1,t_2\}$ is 
                                                  unary and}\\
                                                  & \text{ the other term $R\in \{t_1,t_2\}$
                                                  binary}
    \end{cases}$
\end{enumerate}
\end{proof}

\section{Conclusions}

The approach to description logics via polyadic modal logic is flexible and natural, generalizing
the usual approach. Integrating general relation algebras into the picture makes this issue even 
more explicit. We promote the metaprinciple that almost all results for the binary case have a
corresponding result  in the polyadic framework (with the same complexities). This leaves a 
lot of ground for further work. 

Various fragments of $\mathrm{GRA}(e,p,s,I,\neg,J,\exists,\cup,\cap,\setminus,\dot{\cap},
{\cap}_1,\neg_1)$
can be conceived as relevant for research that relates to description logics (and other 
similar fields). And obviously there is a huge range of further entirely natural operators not
listed here.

\vspace{7mm}

\noindent
\textbf{Acknowledgements.} The authors were supported by
the Academy of Finland project \emph{Theory of computational logics}, 
grants 324435 and 328987.

\bibliography{polyALCIQ}

\appendix

\end{document}